\author{}
\newtheorem{theorem}{Theorem}[section]
\newtheorem{lemma}{Lemma}[section]
\newtheorem{proposition}{Proposition}[section]
\newtheorem{definition}[theorem]{Definition}
\newtheorem{example}{Example}[section]
\newtheorem{remark}{Remark}[section]
\begin{document}
\title{Archimedean-based Marshall-Olkin Distributions and Related Copula Functions}
\author{Sabrina
Mulinacci\thanks{University of Bologna, Department of Statistics, Via delle Belle Arti 41, 40126 Bologna, Italy. Phone: +(39)
0512094368; Fax: +(39)0512086242. e-mail:
sabrina.mulinacci@unibo.it}
}
\date{}
\maketitle
\begin{abstract}
 A new class of bivariate distributions is introduced that extends the Generalized Marshall-Olkin 
distributions of Li and Pellerey (2011). Their dependence structure is studied through the analysis of the copula functions that they induce.
These copulas, that include as special cases the Generalized Marshall-Olkin copulas and the Scale Mixture of Marshall-Olkin copulas (see Li, 2009),
 are obtained through suitable distortions of bivariate 
Archimedean copulas: this induces asymmetry, and 
the corresponding Kendall's tau as well as the tail dependence parameters are studied.
\end{abstract}
\bigskip

{\bf Keywords}: Marshall-Olkin distribution, Marshall-Olkin copula, Kendall's function, 
Kendall's tau, tail dependence parameters

\section{Introduction} 
In their seminal paper (see Marshall and Olkin, 1967), the authors introduce the Marshall-Olkin distribution whose survival version is
$$\bar F(x_1,x_2)=\exp\{-\lambda _1x_1-\lambda _2x_2-\lambda _3\max (x_1,x_2) \}$$
$x_1,x_2\geq 0,\lambda_1,\lambda _2,\lambda _3> 0$. This is the distribution of a bivariate random vector $(M_1,M_2)$ with $M_1=\min (X_1,X_3)$ and $M_2=\min (X_2,X_3)$, where $X_1$, $X_2$, $X_3$ are three independent and exponentially distributed random variables.
\\
For an interpretation of the Marshall-Olkin distribution, 
consider a system with two components $C_1$ and $C_2$ (electronic elements, engines, credit obligors, life-insured 
married couples, etc.). The random variables $X_1$ and $X_2$ represent the arrival times of a shock causing 
the failure or default or death (depending on the case) of $C_1$ and $C_2$, respectively (idiosyncratic shocks), while 
$X_3$ is the arrival time of 
a shock causing the simultaneous end of the lifetimes of both $C_1$ and $C_2$ (systemick shock).
\medskip

In order to generalize this model, the case in which the underlying random variables $X_1,X_2,X_3$ 
are assumed to have marginal distributions different from the exponential one has been considered 
in the existing literature. The most general results, in this direction, are those obtained in Li and Pellerey (2011), 
where no restriction is made on the marginal distributions of the 
random variables $X_1,X_2,X_3$ while joint independence is again assumed. 
More precisely, the authors assume for $(X_1,X_2,X_3)$ a joint survival
distribution of type
\begin{equation}\label{lp}
\bar F(x,y,z)=\exp (-(H_1(x)+H_2(y)+H_3(z)))
\end{equation}
where the $H_i$ are the cumulative hazard functions of the random variables $X_i$.
The corresponding joint survival distribution of
the random variables $M_1$ and $M_2$ is
$$\bar F(x_1,x_2)=\exp\{-H_1(x_1)-H_2(x_2)-H_3(\max (x_1,x_2))\} $$
that is called \emph{Generalized Marshall-Olkin distribution}. 
These distributions incorporate, as special cases, the Marshall-Olkin type distributions introduced in Muliere and Scarsini 
(1987), the bivariate Weibull distributions introduced in Lu (1989) and the bivariate Pareto distributions introduced in Asimit et al. (2010). 
The authors analyze the dependence structure implied by these distributions introducing the 
corresponding 
copula functions, that they call \emph{Generalized Marshall-Olkin copulas}.
\bigskip

However, because of the independence among the original shocks (systemic and idiosyncratic), the dependence between  $C_1$ and $C_2$ lifetimes is only given by the occurrence of the systemic shock.
None the less, one could imagine concrete situations in which there is some additional dependence between $C_1$ and $C_2$ lifetimes given, for example, by some unobserved factors affecting all the original shocks (idiosyncratic and systemic) arrival times.\smallskip 

Consider for example the case of a firm interested in protecting itself against the failure of a production chain composed by a primary machine (or electronic component) 
in serial connection with two secondary machines in such a way  that the failure of the primary machine causes the failure of all the productive system, while the failure of a secondary machine causes only the failure of the corresponding production line.
Clearly the disease of all the three machines (representing the systemic and the idiosyncratic components of the system) is influenced by factors such as maintenance, good electric supply, and so on, and 
this fact induces dependence among them so that the disease of a production line can induce a change in the probability of failure of the primary machine.
An insurance policy written in order to protect against lack in the production caused by the failure of all,
or part, of the system, has to take into account the probability of failure of each machine (systemic and idiosyncratic
effects) and the dependence among them.\smallskip

The above is a lucky case, in which the systemic and the idiosyncratic components are well identified and observable.
This is not in general the case. Consider, for example, a life insurance contract written by a married couple 
with a final payment or goods transfer made when both spouses have died: such contracts are thought in order to transfer money or goods to heirs or to guarantee some
annuity until death, using, for example, the house property as collateral security (this is the case of reverse mortgages contracts). In such cases the contract expires when both spouses have died and their death can occur separately or simultaneously. Their simultaneous death can be caused by the occurrence of some systemic event (for example some ``catastrophic'' event like a car accident, an earthquake and so on) 
but also by something more related to some dependence among their lifetimes idiosyncratic  components: it is 
in fact well known that in very old people the death of one of the two spouses can induce in a very short period 
(so that the two events can be considered simultaneous)
the death of the other. However, notice that, some catastrophic events causing the simultaneous death of the married couple, 
cannot be considered as fully external and independent: think of the case of a simultaneous death caused by a car accident in consequence of the fact that the one of the two that was 
driving had an heart attack.
Clearly there are factors affecting both the simultaneous and the separated deaths: health care, 
social and government assistance, affective relationships and so on.\medskip

The idea of a common factor affecting all the components of a random vector is 
at the basis of the dependence structure represented by Archimedean copulas.
Such a dependence can be in fact obtained starting from a vector of i.i.d. exponentially distributed components 
and dividing all of them by a positive random variable representing a common stochastic intensity
(see Marshall and Olkin, 1988).
This is the case of Archimedean copulas with completely monotone generator. The case of $k$-dimensional Archimedean copulas with $k$-monotone generator can be similarly 
constructed starting from a vector uniformly distributed on the unit $k$-dimensional simplex and multiplying each component by a non-negative 
random variable, representing again a common factor affecting all the random vector (see McNeil and Ne\v{s}lehov\'{a}, 2009). 
In any case, all the random variables involved are modified by the common factor in the same way and, as it is 
well known, the induced dependence is symmetric.
\bigskip

In this paper we generalize the Li and Pellerey (2011) setting allowing for a 
dependence structure of Archimedean type among the original shocks arrival times. 
The restrictive fact of considering an exchangeable
dependence structure represents a first step in the perspective to include dependence: this 
is a compromise between mathematical tractability and realism of the assumptions.\medskip 

More precisely, we assume for $(X_1,X_2,X_3)$ a joint survival distribution 
function more general than that in (\ref{lp}), that is
$$\bar F(x,y,z)=G(H_1(x)+H_2(y)+H_3(z))$$
where $G:[0,+\infty)\rightarrow [0,1]$ is the generator of a three-variate Archimedean copula: in this case the associated copula is Archimedean with generator $G$ and $H_i(x)=G^{-1}(F_i(x))$ where 
$F_i$ the marginal distribution of $X_i$. The corresponding survival distribution of the random vector $(M_1,M_2)$ is
$$\bar F(x_1,x_2)=G(H_1(x_1)+H_2(x_2)+H_3(\max(x_1,x_2))$$
and we call it \emph{Archimedean-based Marshall-Olkin distribution} 
and the copula that it induces \emph{Archimedean-based Marshall-Olkin copula}. \par
As we will see, the family of the copulas so generated contains the two-dimensional Archimedean copulas,  the Generalized Marshall-Olkin copulas and other well known families as specific
 cases.
More precisely, we will show that any Archimedean-based Marshall-Olkin copula can be obtained through a suitable, in general asymmetric,
distortion of a bivariate Archimedean copula with generator $G$.
In the case of symmetric distortions, we recover, even if under more restrictive assumptions, the 
generalization of Archimedean copulas introduced in Durante et al. (2007).
In the case of linear distortions, we recover a proper subset of the Archimax copulas
 of Cap\'{e}ra\`{a} et al. (2000) that, when $G$ is the Laplace transform of a positive random variable, coincide with the 
bivariate Scale Mixture of Marshall-Olkin copulas studied and applied in Li (2009), Bernhart et al. (2013) and
Mai et al. (2013).\medskip

The impact of the distortion on the dependence structure induced by an Archimedean-based Marshall-Olkin copula 
is analyzed by studying its Kendall's function, its Kendall's tau and its tail dependence parameters.
\medskip

The paper is organized as follows. 
In Section \ref{AMO} we introduce  the Archimedean-based Marshall-Olkin distribution. 
In Section \ref{AMOCopula} we derive the Archimedean-based Marshall-Olkin copula. 
Section \ref{dp} is devoted to the analysis of some dependence properties of the copulas introduced: 
in particular, the expression of the Kendall's function and of the Kendall's tau are derived and, in some particular cases,
the tail dependence parameters are calculated.

\section{The Archimedean-based Marshall-Olkin Distribution}\label{AMO}

Let $(\Omega,\mathcal F,\mathbb P)$ be a probability space and $(X_1,X_2,X_3)$ be a random vector such that 
$\mathbb P(X_i>0)=1$ for every $i=1,2,3$. We assume that their joint survival distribution 
is of type
$$\bar F(x,y,z)=G(H_1(x)+H_2(y)+H_3(z))$$
where
\begin{itemize}
 \item $G:[0,+\infty)\rightarrow [0,1]$, with $G(0)=1$ and
$G^\prime$ exists on $(0,+\infty)$,
 it is non-positive, non-decreasing and concave; if $x_G=\inf\{x\geq 0:G(x)=0\}$, we have that $G$ is strictly decreasing on $[0,x_G)$ and so
its inverse function $G^{-1}:(0,1]\rightarrow [0,x_G)$ is well defined: we can extend $G^{-1}$ to the whole interval
$[0,1]$ by setting $G^{-1}(0)=x_G$. 
\item for every $i=1,2,3$, $H_i$ is continuous and strictly increasing for $x>0$, $H_i(x)=0$ for $x\leq 0$
and $\underset{x\rightarrow +\infty}\lim H_i(x)=x_G$.
\end{itemize}
Under these assumptions,
the joint survival distribution can be rewritten as
\begin{equation}\label{startdistr}\bar F(x,y,z)=G(G^{-1}(\bar F_1(x))+G^{-1}(\bar F_2(y))+G^{-1}(\bar F_3(z)))\end{equation}
with $\bar F_i=G\circ H_i$.  Thanks to Sklar's theorem and  Theorem 2 in McNeil and Ne\v{s}lehov\'{a} (2009), the $\bar F_i$, $i=1,2,3$, are the marginal survival distributions of the random variables 
$X_i$ and the dependence structure of the random vector $(X_1,X_2,X_3)$ is of Archimedean type with generator $G$.\medskip

Clearly, this setting includes, as a specific case ($G(x)=e^{-x}$), the family of the Generalized Marshall-Olkin distributions introduced in Li and Pellerey (2011).
\bigskip

In what follows, with abuse of notation, we will denote with $f(+\infty)$ 
the $\underset{x\rightarrow +\infty}\lim f(x)$, when it exists.
\begin{remark}\label{scarsini}
In Muliere and Scarsini (1987), the particular case with $G(x)=e^{-x}$ and $H_i(x)=\lambda _iH(x)$, $\lambda _i>0$, is studied and justified.

The choice of functions of type $H_i(x)=\lambda_iH(x)$ is allowed for any $G$ provided that 
$x_G=+\infty$. If $x_G<+\infty$, the requirement $ H_i(+\infty )=x_G$ is satisfied
if and only if $\lambda_1=\lambda_2=\lambda _3$ which corresponds to the case in which the vector $(X_1,X_2,X_3)$ is exchangeable.
\end{remark}
\bigskip
 
Let us now consider the random vector $(M_1,M_2)$ defined by $M_1=\min (X_1,X_3)$ and $M_2=\min (X_2,X_3)$. The corresponding survival distribution function, for $t_1,t_2>0$, is 
\begin{equation}\label{surv1}\begin{aligned}\bar F_{M_1,M_2}(t_1,t_2)&=\mathbb P(M_1>t_1,M_2>t_2)=\\
   &=\mathbb P(X_1>t_1,X_2>t_2,X_3>\max (t_1,t_2))=\\
&=G(H_1(t_1)+H_2(t_2)+H_3(\max(t_1,t_2))).
  \end{aligned}\end{equation}
We call the above distribution \emph{Archimedean-based Marshall-Olkin distribution}.\medskip

Setting $K_i(t_i)=H_i(t_i)+H_3(t_i)$, the marginal survival distributions are
\begin{equation}\label{margin}\bar F_{M_i}(t_i)=G(K_i(t_i)),\, i=1,2.\end{equation}
\begin{remark}
Notice that if $x_G<+\infty$, then $\bar F_{M_1,M_2}(t_1,t_2)=0$ on $\{(t_1,t_2):H_1(t_1)+H_2(t_2)+H_3(\max(t_1,t_2))\geq x_G\}$ and
$\bar F_{M_i}(t_i)=0$ for all $t_i\geq K_i^{-1}(x_G)$.\\
\end{remark}
\bigskip

\subsection{The singular component}
Considering insurance contracts based on the lifetimes $M_1$ and $M_2$, it is of some relevance 
to separate the impact on the joint distribution (and, equivalently, on the price)
of simultaneous failure or default or death (depending on the application in analysis) from the separated ones.

The simultaneous failure is clearly identified by the singular component of the distribution.
It is in fact well known that Marshall-Olkin distributions (as well as the Generalized Marshall-Olkin distributions introduced 
in Li and Pellerey, 2011) admit a singularity: clearly this fact continues to hold in our extended setting.\bigskip
\medskip

To simplify the notation,
we set
$$\hat H(t)=\sum_{i=1}^3H_i(t).$$
If $M=\min(X_1,X_2,X_3)$, its survival distribution function is given, for $t>0$,  by 
$$\mathbb P(M>t)=\mathbb P\left (X_1>t,X_2>t,X_3>t\right)=G(\hat H(t)).$$
\begin{proposition}\label{singularity}
Assume that $G$ is twice differentiable and that each $H_i$ is differentiable on $(0,+\infty)$. Then
$$\begin{aligned}
  F_{M_1,M_2}(t)&=\mathbb P(M_1\leq t,M_2\leq t)=\\
  &=F^a_{M_1,M_2}(t)+F^s_{M_1,M_2}(t)
  \end{aligned}
$$
where
\begin{equation}\label{price}F_{M_1,M_2}(t)=1+G(\hat H(t))-G(K_1(t))-G(K_2(t))\end{equation}
and
$$F^s_{M_1,M_2}(t)=-\int_0^tH_3^\prime(x)G^\prime (\hat H(x))dx.$$
\end{proposition}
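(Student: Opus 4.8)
The plan is to produce the Lebesgue decomposition of the univariate function $F_{M_1,M_2}(t)=\mathbb P(M_1\le t,M_2\le t)$ in three moves: first derive the closed form of $F_{M_1,M_2}$ by inclusion--exclusion, then isolate the singular mass carried by the event of simultaneous failure, and finally check that the remainder is absolutely continuous. First I would write
$$\mathbb P(M_1\le t,M_2\le t)=1-\mathbb P(M_1>t)-\mathbb P(M_2>t)+\mathbb P(M_1>t,M_2>t).$$
The two middle terms are the marginal survival functions, equal to $G(K_1(t))$ and $G(K_2(t))$ by (\ref{margin}), while the last term is the joint survival (\ref{surv1}) on the diagonal, namely $\bar F_{M_1,M_2}(t,t)=G(H_1(t)+H_2(t)+H_3(t))=G(\hat H(t))$. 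Substituting yields at once $F_{M_1,M_2}(t)=1+G(\hat H(t))-G(K_1(t))-G(K_2(t))$, which is (\ref{price}).

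Next I would argue that the singular part lives on the event of simultaneous failure. Since the $X_i$ have continuous laws, ties occur with probability zero, and one checks that $M_1=M_2$ holds (a.s.) exactly when the systemic shock is the smallest, i.e. $\{M_1=M_2\}=\{X_3<X_1,\,X_3<X_2\}$, on which $M_1=M_2=X_3$. Hence the singular contribution is $F^s_{M_1,M_2}(t)=\mathbb P(M_1=M_2\le t)=\mathbb P(X_3<X_1,\,X_3<X_2,\,X_3\le t)$. To evaluate it I would use the law of $M=\min(X_1,X_2,X_3)$: from $\mathbb P(M>x)=G(\hat H(x))$ its density is $-G'(\hat H(x))\hat H'(x)=-G'(\hat H(x))(H_1'(x)+H_2'(x)+H_3'(x))$, and this splits additively according to which coordinate attains the minimum. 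The term in which $X_3$ is the minimizer (precisely the event $\{X_3<X_1,X_3<X_2\}$) contributes $-G'(\hat H(x))H_3'(x)$, so that
$$F^s_{M_1,M_2}(t)=\int_0^t\bigl(-H_3'(x)G'(\hat H(x))\bigr)\,dx,$$
which is the claimed expression. Equivalently, the same integrand equals $-\frac{\partial}{\partial w}\bar F(z,z,w)|_{w=z}$, the infinitesimal probability that $X_3\in dz$ while $X_1,X_2>z$, and summing the three minimizer terms recovers the full density of $M$, a useful consistency check.

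The step I expect to require the most care is this last computation: justifying that the probability concentrated on the diagonal is genuinely $\int_0^t(-H_3'(x)G'(\hat H(x)))\,dx$ and not merely a heuristic. I would make it rigorous by a conditioning (Riemann-sum) argument, partitioning $(0,t]$ and replacing $\{X_1>X_3,\,X_2>X_3,\,X_3\in(x_i,x_{i+1}]\}$ by $\{X_1>x_i,\,X_2>x_i,\,X_3\in(x_i,x_{i+1}]\}$ up to a higher-order error controlled by the assumed differentiability of $G$ and of the $H_i$, then passing to the limit. Finally, since $G$ is twice differentiable and each $H_i$ is differentiable, differentiating $F_{M_1,M_2}-F^s_{M_1,M_2}$ term by term produces a continuous density, so $F^a_{M_1,M_2}:=F_{M_1,M_2}-F^s_{M_1,M_2}$ is absolutely continuous; this identifies $F^a_{M_1,M_2}$ and $F^s_{M_1,M_2}$ as the absolutely continuous and singular parts and completes the decomposition.
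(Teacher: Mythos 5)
Your proposal is correct in substance but reaches the formula for $F^s_{M_1,M_2}$ by a genuinely different route than the paper. The paper works off the diagonal: it computes $\frac{\partial^2}{\partial t_2\partial t_1}\bar F_{M_1,M_2}(t_1,t_2)$ separately on $\{t_1<t_2\}$ and $\{t_1>t_2\}$, integrates over the two triangles to obtain $\mathbb P(M_1\leq t,M_2\leq t,M_1<M_2)$ and $\mathbb P(M_1\leq t,M_2\leq t,M_1>M_2)$ in closed form, and then gets the singular part as the residual after subtracting these from (\ref{price}). You instead attack the diagonal mass head-on, identifying $\{M_1=M_2\}$ with $\{X_3<\min(X_1,X_2)\}$ and evaluating $\mathbb P(X_3\leq t,\,X_1>X_3,\,X_2>X_3)$ by a Riemann-sum argument, which amounts to integrating $-\frac{\partial}{\partial w}\bar F(z,z,w)\big|_{w=z}=-H_3^\prime(z)G^\prime(\hat H(z))$. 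Your route is more direct and makes the probabilistic meaning of the integrand transparent (the ``rate'' at which the systemic shock arrives strictly first); the paper's route has the advantage of exhibiting the absolutely continuous contributions explicitly, so that the decomposition of the bivariate law into its density part and its diagonal part is established in the same computation.

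One justification needs repair. You dismiss ties among the $X_i$ ``since the $X_i$ have continuous laws,'' but continuity of the marginals alone does not exclude ties (take $X_1=X_2$ with a continuous law). What you actually need is that the trivariate Archimedean distribution assigns no mass to the surfaces $\{X_i=X_j\}$, $i\neq j$; this does hold here, e.g.\ via the simplex representation of McNeil and Ne\v{s}lehov\'{a} (2009), under which each such surface corresponds, for every value of the radial variable, to a null set of the uniform distribution on the simplex. The same fact is what makes the error terms in your Riemann-sum step (the events where $X_1$ or $X_2$ falls in the same small interval as $X_3$) vanish in the limit, so it is worth stating correctly. With that point patched, your argument is complete and yields the same expressions as the paper.
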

\begin{proof}
(\ref{price}) is trivial.\\
Since
$$\frac{\partial^2}{\partial t_2\partial t_1}\bar F_{AMO}(t_1,t_2)=\left\{\begin{array}{c}
                           G^{\prime\prime}\left (H_1(t_1)+K_2(t_2)\right )H_1^\prime (t_1)K_2^\prime(t_2)\text{ if }t_1<t_2\\
 G^{\prime\prime}\left (K_1(t_1)+H_2(t_2)\right )K_1^\prime (t_1)H_2^\prime(t_2)\text{ if }t_1>t_2\\
\end{array}\right .$$
 it is a straightforward computation to show that
$$\mathbb P(M_1\leq t,M_2\leq t,M_2>M_1)=G(\hat H(t))-G(K_2(t))-\int_0^{t}H_1^\prime(x)G^\prime \left (
\hat H(x)\right )dt$$
and 
$$\mathbb P(M_1\leq t,M_2\leq t,M_2<M_1)=1-G(K_1(t))+\int_0^{t}K_1^\prime(x)G^\prime \left (
\hat H(x)\right )dt$$
and
$$F_{M_1,M_2}^s(t)=\mathbb P(M_1\leq t,M_2\leq t,M_1=M_2)=-\int_{0}^{t}H^\prime_3(x)G^\prime \left (
\hat H(x)\right )dt.$$
\end{proof}
As a consequence
\begin{align*}\mathbb P(M_1=M_2)&=-\int_{0}^{+\infty}H^\prime_3(t)G^\prime \left (
\hat H(t)\right )dt\\
&=-\int_{0}^{\hat H^{-1}(x_G)}H^\prime_3(t)G^\prime \left (
\hat H(t)\right )dt\\
&=-\int_{0}^{\hat H^{-1}(x_G)}\frac{H^\prime_3(t)}{\hat H^\prime(t)}\hat H^\prime(t)G^\prime \left (
\hat H(t)\right )dt\\
&=\mathbb E\left [\frac{H_3^\prime(M)}{\hat H^\prime(M)}\right ].
\end{align*}
\begin{example}\label{ex1}
\begin{enumerate}
 \item If $G(x)=e^{-x}$, then
 $$F_{M_1,M_2}^s(t)=\int_{0}^{t}H^\prime_3(t)\exp\left (-\hat H(t)\right )dt
$$
and
$$\mathbb P(M_1=M_2)=\int_{0}^{+\infty}H^\prime_3(t)\exp\left (-\hat H(t)\right )dt
$$
and this is the case of the Generalized Marshall-Olkin distributions (see Li and Pellerey, 2011).
\item If $x_G=+\infty$ and $H_i=\lambda _iH$, then, if $\hat\lambda =\sum_{i=1}^3\lambda _i$,
 \begin{align*}
F_{M_1,M_2}^s(t)&=\frac{\lambda _3}{\hat\lambda}(1-G(\hat\lambda H(t)) \\
&=\frac{\lambda _3}{\hat\lambda}\mathbb P(M\leq t)
 \end{align*}
and 
$$\mathbb P(M_1=M_2)=\frac {\lambda _3}{\hat\lambda }.$$
independently of $G$. In particular, this case includes, when $G(x)=e^{-x}$, the Marshal-Olkin type distribution introduced by
Muliere and Scarsini (1987), and, if $H_i(x)=\lambda _ix$, the standard Marshall-Olkin case distribution.
\item More in general, let us assume that $H_3$ is proportional to $H_1+H_2$, that is $H_3=c(H_1+H_2)$, which is equivalnet to:
$\bar F_{\min (M_1,M_2)}(t)=G\left (\frac 1cH_3(t)\right )$. 
If $x_G<+\infty$, necessarily
$c=\frac 12$, while any choice of $c>0$ is allowed if $x_G=+\infty$.
In any case, we have
\begin{align*}
 F_{M_1,M_2}^s(t)&=\frac c{c+1}\left (1-G\left (\frac c{c+1} H_3(t)\right )\right )\\
&=\frac c{c+1}\mathbb P(M\leq t)
 \end{align*}
and
$$\mathbb P(M_1=M_2)=\frac {c}{c+1}.$$
that doesn't depend on $G$.
\item Both the facts that $F_{M_1,M_2}^s$ is proportional to the cumulative distribution function of $M$ and
that the singular component is independent of $G$ constitute a peculiarity of the above case.\\ If $G(x)=(x+1)^{-1/\theta}$ (Clayton generator) with $\theta >0$ and $H_1(x)+H_2(x)=H_3^2(x)+H_3(x)$
then
$$F_{M_1,M_2}^s(t)=\frac 1{2+\theta}\left (1-(1+H_3(t))^{-\frac 2\theta -1}\right )$$
but
$$\mathbb P(M\leq t)=1-(1+H_3(t))^{-\frac 2\theta}.$$
Moreover,this is equivalent to
$$\mathbb P(M_1=M_2)=\frac {1}{2+\theta}$$
that depends on $\theta$.\\
Similarly, if $H_1(x)+H_2(x)=e^{H_3(x)}-H_3(x)-1$ 
then, again,
$$F_{M_1,M_2}^s(t)=\frac 1{\theta +1}\left (1-e^{-\frac{\theta +1}\theta H_3(t)}\right )$$
with
$$\mathbb P(M\leq t)=1-e^{-\frac 1\theta H_3(t)}$$
and
$$\mathbb P(M_1=M_2)=\frac {1}{1+\theta}.$$
\end{enumerate}
\end{example}

\section{The Archimedean-based Marshall-Olkin Copula Function}\label{AMOCopula}

If $u_i=G(K_i(t_i))$, from (\ref{margin}), we get 
$t_i=K_i^{-1}(G^{-1}(u_i))$. 
Substituting it in (\ref{surv1}), thanks to Sklar's theorem, we obtain the copula associated to the vector $(M_1,M_2)$, that is
\begin{equation}\label{extended_copula}
\begin{aligned}
 &\bar C_{M_1,M_2}(u_1,u_2)=
&=\left\{\begin{array}{c}
G(H_1\circ K_1^{-1}(G^{-1}(u_1))+G^{-1}(u_2))\text{ if }K_1^{-1}(G^{-1}(u_1))\leq K_2^{-1}(G^{-1}(u_2))\\
G(G^{-1}(u_1)+H_2\circ K_2^{-1}(G^{-1}(u_2)))\text{ if }K_1^{-1}(G^{-1}(u_1))> K_2^{-1}(G^{-1}(u_2))\\
         \end{array}
\right .
\end{aligned}\end{equation}
Setting $D_i(x)=H_i\circ K_i^{-1}(x)$ for $x\in [0,x_G]$ and $i=1,2$ we have
$$\bar C_{M_1,M_2}(u_1,u_2)=
\left\{\begin{array}{c}
   G\left (D_1(G^{-1}(u_1))+G^{-1}(u_2)\right )
\text{ if }K_1^{-1}(G^{-1}(u_1))\leq K_2^{-1}(G^{-1}(u_2)))\\
G\left (G^{-1}(u_1)+D_2(G^{-1}(u_2))\right )\text{ if }
K_1^{-1}(G^{-1}(u_1))> K_2^{-1}(G^{-1}(u_2)))\\   
      \end{array}
\right .
$$
Let us analyze the set 
\begin{equation}\label{frontier}
\begin{aligned}F&=\{(u_1,u_2)\in [0,1]^2:K_1^{-1}(G^{-1}(u_1))=K_2^{-1}(G^{-1}(u_2))\}=\\
 &=
\{(u_1,u_2)\in [0,1]^2:
D_1(G^{-1}(u_1))+G^{-1}(u_2)=G^{-1}(u_1)+D_2(G^{-1}(u_2))
\}.
\end{aligned}\end{equation}
We have
\begin{itemize}
 \item if $D_1(x_G)>D_2(x_G)$, then $[0,K_1^{-1}(x_G)]\subsetneqq [0,K_2^{-1}(x_G)]$ and the set (\ref{frontier}) is the graph of the function
$u_2=h(u_1)=G\circ K_2\circ K_1^{-1}\circ G^{-1}(u_1)$ for $u_1\in [0,1]$, where
$h$ is strictly increasing, $h(1)=1$ and $h(0)=\mathbb P(M_2>K_1^{-1}(x_G))>0$;
 \item if $D_1(x_G)<D_2(x_G)$, then
$[0,K_1^{-1}(x_G)]\supsetneqq [0,K_2^{-1}(x_G)]$ and the set (\ref{frontier}) is the graph of the function
$u_1=\hat h(u_2)=G\circ K_1\circ K_2^{-1}\circ G^{-1}(u_2)$ for $u_2\in [0,1]$, where 
$\hat h$ is strictly increasing, $\hat h(1)=1$ and $\hat h(0)=\mathbb P(M_1>K_2^{-1}(x_G))>0$;
 \item if $D_1(x_G)=D_2(x_G)$, then the set (\ref{frontier}) can be represented through both $h$ and $\hat h$ that 
are strictly increasing, $h(1)=\hat h(1)=1$ and $h(0)=\hat h(0)=0$.
\end{itemize}
It follows that
\begin{itemize}
    \item if $D_1(x_G)\geq D_2(x_G)$, 
\begin{equation}\label{c1}\bar C_{M_1,M_2}(u_1,u_2)=
\left\{\begin{array}{c}
   G\left (D_1(G^{-1}(u_1))+G^{-1}(u_2)\right )
\text{ if }u_2\leq h(u_1)\\
G\left (G^{-1}(u_1)+D_2(G^{-1}(u_2))\right )\text{ if }
u_2> h(u_1)\\   
      \end{array}
\right .
\end{equation}
\item if $D_1(x_G)< D_2(x_G)$,
\begin{equation}\label{c2}\bar C_{M_1,M_2}(u_1,u_2)=
\left\{\begin{array}{c}
   G\left (D_1(G^{-1}(u_1))+G^{-1}(u_2)\right )
\text{ if }u_1\geq \hat h(u_2)\\
G\left (G^{-1}(u_1)+D_2(G^{-1}(u_2))\right )\text{ if }
u_1< \hat h(u_2)\\   
      \end{array}
\right .
\end{equation}

   \end{itemize}
Clearly, we recover a copula of type  (\ref{c2}) from a copula of type (\ref{c1}) by inverting $D_1$ with $D_2$ 
and viceversa.

\begin{remark}
Notice that $D_i(x)$ and $x-D_i(x)$ are continuous and strictly increasing on $[0,x_G]$.
Moreover
\begin{equation}\label{deform}
D_i(x)< x,\text{ for }x>0
\end{equation}
and, if $x_G=+\infty$, $\underset{x\rightarrow x_G}\lim D_i(x)=+\infty$ and $\underset{x\rightarrow x_G}\lim x-D_i(x)=+\infty$.
\end{remark}
In next result we will prove that for any suitable pair of functions $D_1$ and $D_2$, the copula in
(\ref{c1}) and (\ref{c2}) can be obtained starting from a three-variate distribution function of 
type (\ref{startdistr}) and following the construction presented above.                                                                                                                             
\smallskip

In what follows we set $\hat D_i(x)=x-D_i(x)$. The following result holds: 
\begin{lemma}\label{l1}
Let $D_1,D_2$ be two strictly increasing and continuous functions on $[0,x_G]$ with $D_i(0)=0$,
$\hat D_i(x)$ strictly increasing and such that, if $x_G=+\infty$, $\underset{x\rightarrow x_G}\lim D_i(x)=
D_i(x_G)=+\infty$ and $\underset{x\rightarrow x_G}\lim \hat D_i(x)=
\hat D_i(x_G)=+\infty$. Then, for every $H_3$ defined on $[0,+\infty)$, continuous,
strictly increasing and such that,
$H_3(0)=0$ and $H_3(+\infty)=x_G$, the function $H_i(x)=\hat D_i^{-1}(H_3(x))-H_3(x)$, for $x\in[0,H_3^{-1}(\hat D_i(x_G)]$, is the unique solutions of
\begin{equation}\label{relazio}D_i(x)=H_i\circ (H_i+H_3)^{-1}(x)\text{ for all }x\in[0,x_G].\end{equation}
\end{lemma}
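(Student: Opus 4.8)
The plan is to read equation (\ref{relazio}) as an inversion problem: since by construction $D_i=H_i\circ K_i^{-1}$ with $K_i=H_i+H_3$, solving (\ref{relazio}) for $H_i$ given $D_i$ and $H_3$ amounts to undoing the map $H_i\mapsto D_i$. The engine of the whole argument is the change of variable $t=K_i^{-1}(x)$, under which (\ref{relazio}) becomes the pair of identities $D_i(x)=H_i(t)$ and $x=K_i(t)=H_i(t)+H_3(t)$. Subtracting these gives $\hat D_i(x)=x-D_i(x)=H_3(t)$, that is $\hat D_i(K_i(t))=H_3(t)$; since $\hat D_i$ is strictly increasing and hence invertible, this yields $K_i(t)=\hat D_i^{-1}(H_3(t))$ and therefore $H_i(t)=K_i(t)-H_3(t)=\hat D_i^{-1}(H_3(t))-H_3(t)$. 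This computation does double duty: it both motivates the stated formula and, read as a chain of forced equalities, proves uniqueness, since any solution of (\ref{relazio}) must coincide with the displayed $H_i$.

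For existence I would then verify directly that the $H_i$ defined by the formula is admissible and solves (\ref{relazio}). First I would record the clean identity $K_i=H_i+H_3=\hat D_i^{-1}\circ H_3$, together with the alternative form $H_i=D_i\circ\hat D_i^{-1}\circ H_3$, obtained from the elementary fact $\hat D_i^{-1}(y)-y=D_i(\hat D_i^{-1}(y))$. The latter exhibits $H_i$ as a composition of strictly increasing continuous maps, giving continuity and strict monotonicity at once, while $\hat D_i(0)=0$ and $H_3(0)=0$ force $H_i(0)=0$. To check (\ref{relazio}) I would compute $K_i^{-1}=H_3^{-1}\circ\hat D_i$ from $K_i=\hat D_i^{-1}\circ H_3$ and then evaluate $H_i\circ K_i^{-1}=D_i\circ\hat D_i^{-1}\circ H_3\circ H_3^{-1}\circ\hat D_i=D_i$, as required.

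The delicate point, and the step I expect to require the most care, is the domain bookkeeping rather than the algebra. The formula for $H_i$ only makes sense where $H_3(x)$ lies in the range of $\hat D_i$, i.e. where $H_3(x)\leq\hat D_i(x_G)$; this is exactly the restriction $x\in[0,H_3^{-1}(\hat D_i(x_G))]$ appearing in the statement. I would justify it by noting that $\hat D_i(x_G)=x_G-D_i(x_G)<x_G$ when $x_G<+\infty$, so $H_3^{-1}(\hat D_i(x_G))$ is finite and well defined, whereas $\hat D_i(x_G)=+\infty$ when $x_G=+\infty$, so the domain is all of $[0,+\infty)$ by the boundary hypotheses on $D_i$ and $\hat D_i$. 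Finally, to ensure (\ref{relazio}) holds on the whole of $[0,x_G]$ I would check that $K_i$ maps $[0,H_3^{-1}(\hat D_i(x_G))]$ onto $[0,x_G]$: indeed $K_i(0)=\hat D_i^{-1}(0)=0$ and $K_i(H_3^{-1}(\hat D_i(x_G)))=\hat D_i^{-1}(\hat D_i(x_G))=x_G$, so $K_i^{-1}$ is defined on all of $[0,x_G]$ and the identity $H_i\circ K_i^{-1}=D_i$ is valid throughout.
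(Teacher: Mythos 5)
Your proposal is correct and follows essentially the same route as the paper: both hinge on the observation that, under the substitution $x=K_i(t)$, equation (\ref{relazio}) is equivalent to $\hat D_i(H_i(t)+H_3(t))=H_3(t)$, whose unique solution (by invertibility of $\hat D_i$) is $H_i=\hat D_i^{-1}\circ H_3-H_3$. Your version merely spells out the existence check and the domain bookkeeping that the paper leaves implicit.
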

\begin{proof}
Notice that $H_i(x)$ is strictly increasing on $[0,H_3^{-1}(\hat D_i(x_G))]$ and satisfies $H_i(0)=0$, 
$H_i(H_3^{-1}(\hat D_i(x_G)))=D_i(x_G)$ and it is the only solution of 
$$\hat D_i(H_i(z)+H_3(z))=H_3(z)\qquad z\in [0,H_3^{-1}(\hat D_i(x_G))]$$
which is equivalent to 
\begin{equation}\label{cond1}D_i(H_i(z)+H_3(z))=H_i(z)\qquad z\in [0,H_3^{-1}(\hat D_i(x_G))]\end{equation}
Setting $x=H_i(z)+H_3(z)\in [0,x_G]$ in (\ref{cond1}) we get (\ref{relazio}). 
\\
If $x_G<+\infty$, $H_i$ can be clearly extended so to be defined on the set $[0,+\infty)$, to be strictly increasing and
with $H_i(+\infty )=x_G$.

\end{proof}
\begin{remark}
Notice that $D_i$ and $\hat D_i$ satisfy the same assumptions and if, given $H_3$, $H_i$ solves (\ref{relazio}), then
$\hat H_i(x)=D_i^{-1}(H_3(x))-H_3(x)=\left (\hat D_i^{-1}-Id\right )^{-1}(H_3(x))$ (where $Id$ is the identity function) solves
$\hat D_i(x)=H_i\circ (H_i+H_3)^{-1}(x)$ for all $x\in[0,x_G]$.
\end{remark}

\begin{example}\label{ex2}\begin{enumerate}
\item If $D_i(x)=\alpha _ix$, $\alpha _i\in (0,1)$, $i=1,2$, then 
$H_i(x)=\frac{\alpha _i}{1-\alpha _i}H_3(x)$, for $i=1,2$ and $x\in [0,H_3^{-1}((1-\alpha _i)x_G)]$.
\item If $D_i(x)=x-\frac 1{\alpha _i}\log (\alpha_i x+1)$, $\alpha _i>0$, $i=1,2$, then 
$H_i(x)=\frac 1{\alpha _i}\left (e^{\alpha _iH_3(x)}-1\right )-H_3(x)$, for $i=1,2$, and $x\in [0,H_3^{-1}(\frac 1{\alpha _i}\log (\alpha_ix_G+1))]$.
\item If $D_i(x)=x-\frac 1{\alpha _i}\left (\sqrt {\alpha_i x+1}-1\right )$, $\alpha _i>0$, $i=1,2$, then 
$H_i(x)=\alpha _iH_3^2(x)+H_3(x)$, for $i=1,2$ and $x\in [0,H_3^{-1}(\frac 1{\alpha _i}(\sqrt{\alpha _i x_G+1}-1))]$.
\item If $D_i(x)=\frac 1{\alpha _i}\left (\sqrt {\alpha_i x+1}-1\right )$, $\alpha _i>0$, $i=1,2$, then 
$H_i(x)=\frac 1{2\alpha _i}\left (\sqrt{1+4\alpha _iH_3(x)}-1\right )$, for $i=1,2$ and $x\in [0,H_3^
{-1}(x_G-\frac 1{\alpha _i}(\sqrt{\alpha _i x_G+1}-1))]$.
\end{enumerate}
\end{example}

\begin{definition}\label{DefCAMO}
Let $G:[0,+\infty)\rightarrow [0,1]$, with $G(0)=1$ and such that 
$G^\prime$ exists on $(0,+\infty)$,
 it is non-positive, non-decreasing and concave. If $x_G=\inf\{x\geq 0:G(x)=0\}$, 
let $D_1,D_2$ be two strictly increasing and continuous functions on $[0,x_G]$ 
with $D_i(0)=0$, 
$\hat D_i(x)$ strictly increasing and such that if $x_G=+\infty$, 
$\underset{x\rightarrow x_G}\lim D_i(x)=D_i(x_G)=+\infty$ and $\underset{x\rightarrow x_G}\lim \hat D_i(x)=
\hat D_i(x_G)=+\infty$. If we set 
$f(x)=\hat D_2^{-1}\circ \hat D_1(x)$ (when this is well defined), we call \emph{Archimedean-based Marshall-Olkin copula with generator $G$
and distortions $D_1$, $D_2$} the copula function $C_{AMO}$ so defined:
\begin{itemize}
    \item if $D_1(x_G)\geq D_2(x_G)$, 
\begin{equation}\label{case1}C_{AMO}(u,v)=
\left\{\begin{array}{c}
   G\left (D_1(G^{-1}(u))+G^{-1}(v)\right )
\text{ if }v\leq h(u)\\
G\left (G^{-1}(u)+D_2(G^{-1}(v))\right )\text{ if }
v> h(u)\\   
      \end{array}
\right .
\end{equation}
where $h(u)=G\circ f\circ G^{-1}(u)$;
\item if $D_1(x_G)< D_2(x_G)$,
\begin{equation}\label{case2}C_{AMO}(u,v)=
\left\{\begin{array}{c}
   G\left (D_1(G^{-1}(u))+G^{-1}(v)\right )
\text{ if }u\geq \hat h(v)\\
G\left (G^{-1}(u)+D_2(G^{-1}(v))\right )\text{ if }
u_1< \hat h(u_2)\\   
      \end{array}
\right .
\end{equation}
where $\hat h(v)=G\circ f^{-1}\circ G^{-1}(v)$.
   \end{itemize}
\end{definition}\bigskip

It is evident from the definition that the $C_{AMO}$ copula is a distortion of the Archimedean copula 
with generator $G$ through the functions $D_i$.
More precisely, if $D_1\neq D_2$ (that is if $H_1\neq H_2$), the Archimedean copula is differently modified above and below the curve $F$ given in (\ref{frontier})
and the obtained copula is obviously asymmetric.\\
Conversely, if $D_1=D_2=D$ (that is if $H_1=H_2$) we get
\begin{equation}\label{durante}
C_{AMO}(u,v)
=\left\{\begin{array}{c}
   G\left (D(G^{-1}(u))+G^{-1}(v)\right )\text{ if }v\leq u\\
G\left (G^{-1}(u)+D(G^{-1}(v))\right )\text{ if }v>u\\   
      \end{array}
\right .
\end{equation}
and the obtained copula remains exchangeable. The family defined in (\ref{durante}) is contained in
the class of copulas introduced in Durante et al. (2007), defined
as
\begin{equation}\label{d} C_{\phi,\psi}(u,v)=\phi^{[-1]}(\phi(u\wedge v)+\psi (u\vee v))\end{equation}
with $\phi:[0, 1]\rightarrow [0,+\infty]$, continuous, convex
and strictly decreasing, $\psi:[0, 1]\rightarrow [0,+\infty]$, continuous, decreasing and such that
$\psi(1)=0$ and $\psi-\phi$ increasing in $[0,1]$:
(\ref{durante}) is trivially recovered from (\ref{d}) when $\phi (1)=G^{-1}(1)=0$ and $\psi(t)=D(G^{-1}(t))$.
\begin{example}
 If $G(x)=e^{-x}$
$$C_{AMO}(u,v)=\left \{\begin{array}{c}
ve^{-D_1(-\ln u)}\text{ if }v\leq \exp (-f(-\ln u))\\ 
ue^{-D_2(-\ln v)}\text{ if }v> \exp (-f(-\ln u))\\
             \end{array}
\right .
$$
and we recover the Generalized Marshall-Olkin copula introduced in Li and Pellerey (2011).
In particular, if, for $i=1,2$, $D_i(x)=\alpha_ix$ with $\alpha _i\in (0,1)$, we recover the classical Marshal-Olkin copula.\end{example}
\begin{example}
If $D_i(x)=\alpha_ix$ for $i=1,2$ and $\alpha _i\in (0,1)$, then
\begin{equation}\label{constant}C_{AMO}(u,v)
  =\left\{\begin{array}{c}
          G\left (\alpha _1G^{-1}(u)+G^{-1}(v)\right )\text{ if }\,v\leq G\left (\frac{1-\alpha _1}{1-\alpha _2}G^{-1}(u)\right )\\
          G\left (G^{-1}(u)+\alpha _2G^{-1}(v)\right )\text{ if }\,
v> G\left (\frac{1-\alpha _1}{1-\alpha _2}G^{-1}(u)\right )         \end{array}
\right .\end{equation}
This particular specification constitutes a subclass of the family
 of copulas called  Archimax copulas introduced in Cap\'{e}ra\`{a} et al. (2000). Archimax copulas are defined as
\begin{equation}\label{archimax}C_{G,A}(u,v)=G\left (\left (G^{-1}(u)+G^{-1}(v)\right ) A\left (\frac{G^{-1}(u)}{G^{-1}(u)+G^{-1}(v)}\right )\right )\end{equation}
where $A:[0,1]\rightarrow [1/2,1]$ is a convex function such that $\max(t, 1-t)\leq A(t)\leq 1$ for all
$t\in [0,1]$ and copulas in (\ref{constant}) can be obtained choosing
$$A(t)=\left \{\begin{array}{c}
             (\alpha _1-1)t+1\text{ if }t\leq \frac{1-\alpha _2}{2-\alpha_1-\alpha _2}\\
             (1-\alpha _2)t+\alpha _2\text{ if }t> \frac{1-\alpha _2}{2-\alpha_1-\alpha _2}\\
            \end{array}
\right .$$ 
The same fact holds if one considers piecewise linear distortions $D_i$, to which correspond piecewise affine 
functions $A$ in representation (\ref{archimax}).
\medskip

Moreover, the subset of copulas of type  (\ref{constant}) when $G$ is the Laplace transform of a positive random variable, coincides with
the family of bivariate Scale Mixture of Marshall-Olkin copulas (see Li (2009), Bernhart et al. (2013) and
Mai et al. (2013)). These copulas are obtained as the copulas associated to the random vector $\left (\frac{Z_1}Y,\frac{Z_2}Y\right)$ where
$(Z_1,Z_2)$ has a bivariate Marshall-Olkin distribution and $Y$ is an independent positive random variable. 
In the exchangeable case, May et al. (2013) propose an alternative construction: 
let
$\epsilon _1,\epsilon _2$ be i.i.d. unit exponentially distributed random variables, $M>0$ be a random variable with
Laplace transform $G(x)$ and $\Lambda _t\neq 0$ be a L\'evy subordinator with Laplace exponent $\Psi$; 
assuming that they are mutually independent, define
$$\tau_k=\inf\{t\geq 0:M_t\geq \epsilon _k\}$$
where $M_t=\Lambda_{MG^{-1}(1-p(t))/\Psi (1)}$ with $p(t)$ a given distribution function. 
The copula associated to $(\tau_1,\tau _2)$
is
$$C(u,v)=G \left(G^{-1}(\min (u,v))+ G^{-1}(\max (u,v))\frac{\Psi (2)-\Psi (1)}{\Psi (1)}\right )$$
which is obviously of type (\ref{constant}) with $\alpha _1=\alpha _2=\frac{\Psi (2)-\Psi (1)}{\Psi (1)}$.

\end{example}
\bigskip

Under the assumptions of Proposition \ref{singularity},
$$\begin{aligned}
   -\int_{0}^{\hat H^{-1}(x_G)}H^\prime_3(t)G^\prime \left (
\hat H(t)\right )dt&=\mathbb P(M_1=M_2)=\\
&=\mathbb P(\bar F_{M_1}^{-1}(U)=\bar F_{M_2}^{-1}(V))=\\
&=\mathbb P(K_1^{-1}\circ G(U)=K_2^{-1}\circ G(V))=\\
&=\mathbb P(F).
  \end{aligned}
$$
Hence, if $G$ is twice differentiable and each $D_i$ is differentiable on $[0,x_G]$,
the copula function $C_{AMO}(u,v)$ has
a singularity on the curve $F$ (see (\ref{frontier})), whose mass, expressed in terms of the distortions, is 
$$\mathbb P(F)=-\int_{0}^{T^{-1}(x_G)}G^\prime(T(x))
dx$$
where 
\begin{equation}\label{ti}
 T(x)=\hat D_1^{-1}(x)+\hat D_2^{-1}(x)-x
\end{equation}
\bigskip

It is a known fact that two Archimedean copulas with generators $G_A$ and $G_B$ coincide if and only if
there exists $\alpha >0$ such that $G_A(x)=G_B(\alpha x)$ (see Corollary 2.2.6 in Alsina et al., 2006). In what follows we will present the analogous result for the Archimedean-based 
Marshall-Olkin copula functions.
\begin{theorem}\label{coincident}
Let $C^A$ and $C^B$ be two Archimedean-based Marshall-Olkin copula functions with generators $G_A$ and $G_B$ and distortions
$A_1,\,A_2$ and $B_1,\, B_2$, respectively.
$$C^A\equiv C^B$$ if and only if there exists $m>0$ such that
\begin{equation}\label{sab22}G_B(z)=G_A(mz)\text{ for  }z\in [0,+\infty)\end{equation}
and
\begin{equation}\label{sab33}
B_i(z)=\frac 1mA_i(mz)\text{ for }z\in \left[0,x_{G_B}\right ],\, i=1,2.\end{equation}
\end{theorem}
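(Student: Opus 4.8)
\emph{Proof plan.} The plan is to dispatch the \emph{if} direction by direct substitution and to settle the \emph{only if} direction by reducing the identity $C^A\equiv C^B$ to a pair of Cauchy-type functional equations for the composition $\psi:=G_B^{-1}\circ G_A$, which I then expect to force $\psi$ to be linear. For the \emph{if} part I would first record the elementary consequences of (\ref{sab22})--(\ref{sab33}): namely $G_B^{-1}(u)=\frac1m G_A^{-1}(u)$, $x_{G_B}=\frac1m x_{G_A}$, and $\hat B_i(z)=\frac1m\hat A_i(mz)$. The last identity gives $B_1(x_{G_B})-B_2(x_{G_B})=\frac1m\big(A_1(x_{G_A})-A_2(x_{G_A})\big)$, so both copulas fall in the same case of Definition \ref{DefCAMO}. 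Substituting these relations into the branches of (\ref{case1}) written for $C^B$, the factor $m$ cancels inside the argument of $G_B$ and one recovers exactly the corresponding branch of $C^A$; the same cancellation applied to $f_B=\frac1m f_A(m\,\cdot\,)$ shows $h_B=h_A$, so the branches are glued along the same frontier and $C^A\equiv C^B$.

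For the \emph{only if} part, suppose $C^A\equiv C^B$. The first thing I would establish is that the two frontiers coincide: since the singular component of an Archimedean-based Marshall-Olkin copula is carried exactly by the curve $F$ of (\ref{frontier}), uniqueness of the Lebesgue decomposition forces equality of these curves, so the two copulas belong to the same case, say that of (\ref{case1}), and $h_A=h_B=:h$. On the common region $\{v\le h(u)\}$ both representations in (\ref{case1}) then agree. Writing $a=G_A^{-1}(u)$, $b=G_A^{-1}(v)$ and $\psi=G_B^{-1}\circ G_A$ (continuous, strictly increasing, with $\psi(0)=0$ and $\psi(x_{G_A})=x_{G_B}$), applying $G_B^{-1}$ to $G_A(A_1(a)+b)=G_B(B_1(\psi(a))+\psi(b))$ yields
\begin{equation}\label{funeq1}\psi(A_1(a)+b)=B_1(\psi(a))+\psi(b),\qquad b\ge f_A(a),\end{equation}
and the same computation on $\{v>h(u)\}$ gives
\begin{equation}\label{funeq2}\psi(a+A_2(b))=\psi(a)+B_2(\psi(b)),\qquad b<f_A(a).\end{equation}

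The next step exploits (\ref{funeq1}): it says that $b\mapsto\psi(A_1(a)+b)-\psi(b)$ is constant (equal to $B_1(\psi(a))$) on its admissible range. Taking $a$ small, so that $f_A(a)$ is small and the shift $\delta:=A_1(a)$ ranges over an interval $[0,\delta_0]$ while $b$ ranges over a genuine interval, I would read this as the increment condition $\psi(\delta+b)-\psi(b)=\beta(\delta)$ with $\beta$ independent of $b$; together with the monotonicity of $\psi$ this is Cauchy's equation and forces $\psi$ to be affine on a subinterval, hence $\psi(x)=kx$ there since $\psi(0)=0$. Feeding $\psi(x)=kx$ (for small $x$) back into (\ref{funeq1}) then shows that $\psi$ has constant slope $k$ on each interval $[A_1(a),A_1(a)+\delta_0]$, and as $a$ varies these intervals cover $[0,x_{G_A}]$, so $\psi(x)=kx$ on the whole domain. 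Setting $m=1/k$ this is precisely $G_A(x)=G_B(x/m)$, i.e. (\ref{sab22}). Substituting $\psi(x)=x/m$ into (\ref{funeq1}) and (\ref{funeq2}) finally collapses them to $\frac1m(A_1(a)+b)=B_1(a/m)+\frac bm$ and its analogue, i.e. $B_i(z)=\frac1m A_i(mz)$, which is (\ref{sab33}).

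The hard part will be the domain bookkeeping rather than the algebra: equations (\ref{funeq1})--(\ref{funeq2}) hold only on the two regions, not on all of $[0,x_{G_A}]^2$, so the Cauchy argument must be run first in a small corner where the admissible ranges of shift $\delta$ and base point $b$ are both genuine intervals, and only then propagated to the full domain using the functional equation itself. The preliminary step identifying the singular supports — and hence the frontiers and the case distinction — is what makes the two regions common to both copulas and so legitimises the termwise comparison of branches that drives the whole argument.
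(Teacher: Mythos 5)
Your proposal is correct and follows the same overall skeleton as the paper's proof: matching of the singular curves to force the two copulas into the same case of Definition \ref{DefCAMO}, termwise comparison of the two branches on their common regions, reduction to a functional equation for the transfer map between the generators, linearity of that map, and back-substitution to obtain (\ref{sab33}); the ``if'' direction is the same routine cancellation. The one genuine divergence is how linearity is extracted: the paper differentiates the functional equation $A_1(\psi(x))+\psi(y)=\psi(B_1(x)+y)$ with respect to $y$ (with $\psi=G_A^{-1}\circ G_B$, the inverse of your convention), obtains $\psi^\prime(y)=\psi^\prime(B_1(x)+y)$ on the admissible range, and concludes that $\psi^\prime$ is locally, hence globally, constant; you instead read the equation as saying that the increment $\psi(\delta+b)-\psi(b)$ depends only on $\delta$, invoke Cauchy's equation plus monotonicity in a small corner, and then propagate. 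Your route buys independence from the differentiability of $\psi$ (though under the standing hypotheses $G^\prime$ is strictly negative on $(0,x_G)$, so $\psi$ is differentiable and the paper's shortcut is legitimate), at the price of the domain bookkeeping you yourself flag; the paper's route is shorter but leans on the same locally-constant-implies-globally-constant propagation over the admissible region. Both arguments share the same slightly informal preliminary step identifying the singular supports with the curve $F$, so no new gap is introduced there.
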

\begin{proof}
Since we are looking for $C^A\equiv C^B$, necessarily the corresponding singularity sets $F$ in (\ref{singularity})
must coincide. Hence $C^A$ and $C^B$ must be of the same type (\ref{case1}) or (\ref{case2}).
We show the proof in case (\ref{case1}) being the other perfectly identical.
Hence, we assume $A_1(x_{G_A})\geq A_2(x_{G_A})$, $B_1(x_{G_B})\geq B_2(x_{G_B})$ and 
$$C^A(u,v)
=\left\{\begin{array}{c}
   G_A\left (A_1(G_A^{-1}(u))+G_A^{-1}(v)\right )\text{ if }v\leq h_A(u)\\
   G_A\left (G_A^{-1}(u)+A_2(G_A^{-1}(v))\right )\text{ if }v> h_A(u)\\
      \end{array}
\right .
$$
and
$$C^B(u,v)
=\left\{\begin{array}{c}
   G_B\left (B_1(G_B^{-1}(u))+G_B^{-1}(v)\right )\text{ if }v\leq h_B(u)\\
   G_B\left (G_B^{-1}(u)+B_2(G_B^{-1}(v))\right )\text{ if }v> h_B(u)\\
      \end{array}
\right ..
$$
Let $h(u)=\min(h_A(u),h_B(u))$. For $C^A\equiv C^B$, necessarily 
\begin{equation}\label{sab1}G_A\left (A_1(G_A^{-1}(u))+G_A^{-1}(v)\right )=G_B\left (B_1(G_B^{-1}(u))+G_B^{-1}(v)\right )\end{equation}
for $u\in [0,1]$ and $v\leq h(u)$. We set $\psi=G_A^{-1}\circ G_B:[0,x_{G_B}]\rightarrow [0,x_{G_A}]$, 
$x=G_B^{-1}(u)\in [0,x_{G_B}]$ and $y=G_B^{-1}(v)\in [0,x_{G_B}]$. So (\ref{sab1}) can be rewritten as
\begin{equation}\label{sab2}
 A_1(\psi(x))+\psi(y)=\psi(B_1(x)+y)
\end{equation} 
with $y\in[0,x_{G_B}]$ and $x\leq G_B^{-1}(h^{-1}(G_B(y)))$. Since $G_A$ and $G_B$ are differentiable on $(0,x_{G_A})$
and $(0,x_{G_B})$, respectively, and their derivatives
are strictly negative, $\psi$ is differentiable on $(0,x_{G_B})$ as well. After differentiating (\ref{sab2}) 
with respect to $y$, we get
$$\psi^\prime (y)=\psi^\prime(B_1(x)+y).$$
This holds whenever  $0<x<G_B^{-1}(h^{-1}(G_B(y)))$; hence $\psi^\prime$, being locally constant,
is also globally contant:
then there exists $m$ such that
$\psi^\prime(y)=m$ for all $y\in (0,x_{G_B})$. It follows that $\psi(y)=my+q$, for $y\in [0,x_{G_B}]$, and since 
$\psi$ is strictly increasing, necessarily,
$m> 0$. Moreover, since $\psi(0)=0$, $q=0$.\\
Hence
\begin{equation}\label{sab11}G_B(x)=G_A(mx),\, x\in [0,x_{G_B}]\end{equation}
and $x_{G_B}=\frac{x_{G_A}}m$.\\
Substituting this relation in (\ref{sab1}) we get
$$A_1(G_A^{-1}(u))=mB_1\left (\frac{G_A^{-1}(u)}m\right)$$
and, setting $z=\frac{G_A^{-1}(u)}m\in\left[0,\frac{x_{G_A}}m\right]$, we obtain
$$B_1(z)=\frac 1mA_1(mz).$$
In order to guarantee that $C^A\equiv C^B$ 
$$G_A\left (G_A^{-1}(u)+A_2(G_A^{-1}(v))\right )=G_B\left (G_B^{-1}(u)+B_2(G_B^{-1}(v))\right )$$
must hold for $v\in [0,1]$ and $u< \min(h^{-1}_A(v),h^{-1}_B(v))$, as well.
Using (\ref{sab11}) we obtain
$$A_2(G_A^{-1}(v))=mB_2\left (\frac{G_A^{-1}(v)}m\right)$$
and, setting $z=\frac{G_A^{-1}(v)}m\in\left[0,\frac{x_{G_A}}m\right ]$, we obtain
$$B_2(z)=\frac 1mA_2(mz).$$
\medskip

The converse is trivial.
\end{proof}

\section{Dependence Properties}\label{dp}
In this Section we will consider the concordance measure induced by the $C_{AMO}$ copula by calculating the corresponding
Kendall's function and Kendall's tau. Since, as we noticed,
in general, the $C_{AMO}$ copula is an asymmetric distortion of an Archimedean copula, we investigate the effect of
such a distortion on the tail dependence parameters.\medskip

Let us start by observing that, thanks to (\ref{deform}),
\begin{equation}\label{maggiorazione}
 C_{AMO}(u,v)\geq G(G^{-1}(u)+G^{-1}(v)), \qquad (u,v)\in [0,1]^2.
\end{equation}

\subsection{The Kendall's function}
We remind that the Kendall's function of a copula $C$, is the cumulative distribution function of the random variable $C(U,V)$
with respect to the probability induced by $C$ (see Nelsen, 2006), that is the $C$-measure of the set
$$A_t=\{(u,v)\in [0,1]^2:C_{AMO}(u,v)\leq t\}.$$
Let $K_G(t)$ be the Kendall's function of the Archimedean copula with generator $G$ and
$K_{AMO}$ be the Kendall's function of the copula $C_{AMO}$. Clearly, from (\ref{maggiorazione})
$$K_{AMO}(t)\leq K_G(t),\qquad t\in [0,1].$$
More precisely,
\begin{theorem}
$$K_{AMO}(t)=K_G(t)+
G^\prime( G^{-1}(t))\cdot
T^{-1} (G^{-1}(t))$$
where $T$ is defined in (\ref{ti}).  
\end{theorem}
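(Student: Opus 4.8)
The plan is to compute directly the $C_{AMO}$-measure of the super-level set $A_t=\{(u,v):C_{AMO}(u,v)\le t\}$ and to separate the contribution coming from the singular mass on the curve $F$ from the absolutely continuous part, which I expect to reproduce (after a change of variables) the Archimedean Kendall function $K_G$. First I would work with the representation \eqref{case1} (the case $D_1(x_G)\ge D_2(x_G)$; the other case is symmetric by swapping $D_1\leftrightarrow D_2$). Passing to the generator coordinates $x=G^{-1}(u)$, $y=G^{-1}(v)$ on $[0,x_G]^2$, the copula level $C_{AMO}(u,v)=t$ becomes, thanks to \eqref{deform} and the definition of $D_i$, the condition $D_1(x)+y=G^{-1}(t)$ below the frontier and $x+D_2(y)=G^{-1}(t)$ above it. Since $\hat D_i(x)=x-D_i(x)$ is strictly increasing, these level curves are monotone and the region $A_t$ is a ``staircase'' region whose boundary meets the singular curve $F$ at a single point.

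Next I would exploit the fact, established just before the theorem, that the singular component of $C_{AMO}$ is concentrated on $F$ with total mass $-\int_0^{T^{-1}(x_G)}G'(T(x))\,dx$, where $T(x)=\hat D_1^{-1}(x)+\hat D_2^{-1}(x)-x$ as in \eqref{ti}. The key geometric observation is that the level set $\{C_{AMO}=t\}$ intersects $F$ exactly where $\hat D_1(x)=\hat D_2(y)=:s$ and $D_1(x)+y=G^{-1}(t)$ simultaneously; solving these gives $x=\hat D_1^{-1}(s)$, $y=\hat D_2^{-1}(s)$ with $s$ determined by $T$-type relation, and the portion of the singular mass lying inside $A_t$ is precisely $-\int_0^{G^{-1}(t)}$-type integral that I can rewrite, after substituting $s=T^{-1}(\cdot)$, as a boundary term. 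The aim is to show that this singular contribution accounts exactly for the correction term $G'(G^{-1}(t))\cdot T^{-1}(G^{-1}(t))$.

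For the absolutely continuous part, I would use the standard Archimedean identity that $K_G(t)=t-\dfrac{\varphi(t)}{\varphi'(t)}$ in Nelsen's notation, i.e. expressible through $G$ and $G^{-1}(t)$, and check that integrating the density of $C_{AMO}$ over the sub-level region reproduces $K_G(t)$ up to the singular correction. Concretely, because below and above the frontier $C_{AMO}$ is obtained from the Archimedean copula $G(x+y)$ by the distortions $x\mapsto D_1(x)$ and $y\mapsto D_2(y)$ respectively, the Jacobian of these substitutions is absorbed cleanly, and the absolutely continuous mass of $A_t$ equals the Archimedean mass of the corresponding region, namely $K_G(t)$ diminished by the mass the distortion pushes onto $F$.

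The main obstacle I anticipate is the careful bookkeeping of \emph{which} part of the level curve lies above versus below the frontier $F$, and verifying that the endpoint where the level curve crosses $F$ yields exactly the argument $T^{-1}(G^{-1}(t))$ rather than some other combination of $\hat D_1^{-1},\hat D_2^{-1}$. In other words, the delicate step is matching the upper limit of the singular integral to $T^{-1}(G^{-1}(t))$ and confirming the factor $G'(G^{-1}(t))$ emerges with the correct sign (recall $G'\le 0$, so the correction is nonpositive, consistent with $K_{AMO}(t)\le K_G(t)$ from \eqref{maggiorazione}). Once the crossing point is pinned down via the simultaneous equations $\hat D_1(x)=\hat D_2(y)$ and $D_1(x)+y=G^{-1}(t)$, the rest reduces to differentiating the singular-mass integral with respect to its upper limit and collecting terms.
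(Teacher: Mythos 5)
Your geometric setup is fine (the $t$-level curve meets the frontier $F$ at a single point, and its two branches are governed by $D_1$ and $D_2$ respectively), but the central quantitative claim of your plan is false, so the proof as sketched cannot close. You propose to show that the singular mass of $C_{AMO}$ lying inside $A_t$ equals the correction term $G^\prime(G^{-1}(t))\,T^{-1}(G^{-1}(t))$ and that the absolutely continuous mass of $A_t$ equals $K_G(t)$. Neither identity holds. There is first a sign obstruction you half-notice yourself: the correction term is nonpositive (since $G^\prime\le 0$), while a singular mass is nonnegative, so the two cannot coincide unless both vanish. Concretely, take $G(x)=e^{-x}$ and $D_1(x)=D_2(x)=x/2$ (the classical Marshall--Olkin copula with $\alpha_1=\alpha_2=1/2$), so $\hat D_i^{-1}(x)=2x$ and $T(x)=3x$. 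Parametrizing $F$ by $s$ with $x=\hat D_1^{-1}(s)$, $y=\hat D_2^{-1}(s)$ gives $C_{AMO}=G(T(s))$ on $F$, hence $A_t\cap F=\{s\ge T^{-1}(G^{-1}(t))\}$ and its singular mass is $-\int_{T^{-1}(G^{-1}(t))}^{+\infty}G^\prime(T(x))\,dx=t/3$, whereas the correction term is $t\ln t/3<0$; correspondingly the absolutely continuous mass of $A_t$ is $\tfrac 23(t-t\ln t)=\tfrac 23K_G(t)\neq K_G(t)$. The assertion that ``the Jacobian of these substitutions is absorbed cleanly'' is precisely where the argument breaks: the distortions $D_i$ genuinely change the density of the absolutely continuous part, they do not merely reparametrize the Archimedean one.

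The paper's proof avoids the singular/absolutely-continuous decomposition altogether. Following Nelsen's Theorem 4.3.4, it splits $A_t$ into the rectangle $R_t=[0,u_t]\times[0,v_t]$, where $(u_t,v_t)$ is the intersection of the level curve with the graph of $h$, plus the two regions $B_1$, $B_2$ trapped between the level curve and the coordinate axes on either side of the frontier. Then $\mathbb P(R_t)=C_{AMO}(u_t,v_t)=t$, and $\mathbb P(B_i)$ is computed as a limit of Riemann sums of copula-rectangle masses along the level curve, yielding $-G^\prime(G^{-1}(t))D_1(G^{-1}(u_t))$ and $-G^\prime(G^{-1}(t))D_2(G^{-1}(v_t))$; the algebraic identity $\hat D_1\circ g^{-1}=T^{-1}$ with $g=D_1+f$ then produces the stated formula. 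If you wish to salvage your route you would need to compute \emph{both} the singular mass on $A_t\cap F$ and the integral of the actual density of $C_{AMO}$ over $A_t$ and add them; that is a legitimate but different and longer computation, and the two pieces do not individually match $K_G(t)$ and the correction term as you claim.
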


\begin{proof}
We follow the same ideas and spirit of the proof of the analogous result in the Archimedean case (see, for example, 
Theorem 4.3.4 in Nelsen, 2006).\medskip

We will prove the result in the case $D_1(x_G)\geq D_2(x_G)$, being the alternative case perfectly analogous.\\
We start with looking for the intersection of the $t$-level curve $\{(u,v)\in [0,1]^2:C_{AMO}(u,v)=t\}$, for $t>0$,
and the graph of the function $h$ of Definition \ref{DefCAMO}, that is we solve for $u$ the equation
$$G(D_1( G^{-1}(u))+G^{-1}(h(u)))
=t$$
from which
$$D_1(G^{-1}(u))+f(G^{-1}(u))=G^{-1}(t)$$
or, setting $g(x)=D_1(x)+f(x)$,
$$g(G^{-1}(u))=G^{-1}(t).$$
Since, thanks to the assumptions, $g$ is invertible, the unique solution is 
\begin{equation}\label{ut}u_t=G\circ g^{-1}\left (G^{-1}(t)\right )\in [t,1]\end{equation}
and, if
\begin{equation}\label{vt}v_t=h(u_t)=G\circ f\circ g^{-1}\left (G^{-1}(t)\right ),\end{equation}
$(u_t,v_t)$ is the unique intersection of the level curve and the graph of $h$.
\bigskip

We split the set $A_t$ into three regions: the rectangle $R_t=[0,u_t]\times [0,v_t]$ and the sets 
$$B_1=\{(u,v)\in [0,1]^2:u\in [u_t,1], v\leq G(G^{-1}(t)-D_1(G^{-1}(u)))\}$$ and 
$$B_2=\{(u,v)\in [0,1]^2:v\in [v_t,1], u\leq G(G^{-1}(t)-D_2(G^{-1}(v)))\}$$
 
Obviously 
$$\mathbb P(R_t)=C(u_t,v_t)=t.$$
Let us compute now $\mathbb P(B_1)$ and $\mathbb P(B_2)$. We start with $\mathbb P(B_1)$.
\medskip

Consider the partition of the interval $[u_t,1]$ given by the points 
$$t_k=G\circ D_1^{-1}\left (D_1\left (G^{-1}(u_t)\right ) (1-\frac k{n} )\right ).$$
Notice that $t_0=u_t\leq t_1\leq \cdots\leq t_{n}=1$ and that, since $G\circ D_1^{-1}$ is uniformly
continuous on $[0,D_1(G^{-1}(u_t))]$, the width of the partition so defined goes to zero as $n\rightarrow +\infty$.
\medskip

For $u\in[u_t,1]$, the $t$-level curve in $B_1$ is
$$v(u)=G\left (G^{-1}(t)-D_1(G^{-1}(u)) \right ).$$

For $k=1,\ldots ,n$, let $R_k=[t_{k-1},t_{k}]\times [0,v(t_{k-1})]$. Clearly
$$P(R_k)=C(t_k,v(t_{k-1}))-C(t_{k-1},v(t_{k-1}))=
C(t_k,v(t_{k-1}))-t$$
and
$$\begin{aligned}
   C(t_k,v(t_{k-1}))&=
G\left (D_1(G^{-1}(t_k))+G^{-1}(v(t_{k-1}))\right )=\\
&=G\left (G^{-1}(t)-\frac{D_1(G^{-1}(u_t))}{n}\right ).
  \end{aligned}
$$
If $S_n=\sum_{k=1}^{n}R_k$,
$$\begin{aligned}\mathbb P(S_n)&=n\left [G\left (G^{-1}(t)-\frac{D_1(G^{-1}(u_t))}{n}\right )-t\right ]=\\
  &=\frac{\left [G\left (G^{-1}(t)-\frac{D_1(G^{-1}(u_t))}{n}\right )-G(G^{-1}(t))\right ]}
{\frac{D_1(G^{-1}(u_t))}{n}}D_1(G^{-1}(u_t))
  \end{aligned}
$$
and 
$$\mathbb P(B_1)=\underset{n\rightarrow +\infty}\lim \mathbb P(S_n)=-G^\prime (G^{-1}(t))D_1(G^{-1}(u_t)).$$
\medskip

For the set $B_2$, considering the partition of the interval $[v_t,1]$ given by the points 
$$s_k=G\left (D_2^{-1}\left (D_2\left (G^{-1}(v_t)\right ) (1-\frac k{n} )\right )\right ),$$
exactly as done for $B_1$, we get
$$\mathbb P(B_2)=-G^\prime (G^{-1}(t))D_2(G^{-1}(v_t)).$$
\bigskip

It follows that, using (\ref{ut}) and (\ref{vt}),
$$\begin{aligned}\mathbb P(A_t)&=\mathbb P(R_t)+\mathbb P(B_1)+\mathbb P(B_2)=\\
   &=t-
G^\prime (G^{-1}(t))\left [D_1(G^{-1}(u_t))+D_2(G^{-1}(v_t))\right ]=\\
&=t-
G^\prime (G^{-1}(t))\left [
D_1( g^{-1}(G^{-1}(t)))+D_2\circ f(g^{-1}(G^{-1}(t) ))\right ]=\\
&=K_G(t)+G^\prime (G^{-1}(t))\left [G^{-1}(t)
-D_1( g^{-1}(G^{-1}(t)))-D_2\circ f(g^{-1}(G^{-1}(t) ))\right ]\\
&=K_G(t)+G^\prime (G^{-1}(t))\left [Id
-D_1\circ g^{-1}-D_2\circ f\circ g^{-1}\right ]\circ G^{-1}(t) \\
  \end{aligned}.$$
where $Id$ is the identity function and $K_G(t)=t-
G^\prime (G^{-1}(t))G^{-1}(t)$ is the Kendall's function of the Archimedean copula with generator $G$.\\
But, since $g(x)=T\circ \hat D_1(x)$, where $T$ is defined in (\ref{ti}), we have
$$\begin{aligned}
  Id-D_1\circ g^{-1}-D_2\circ f\circ g^{-1} &=(g-D_1-D_2\circ f)\circ g^{-1}=\\
&=(f-D_2\circ f)\circ g^{-1}=\\
&=\hat D_2\circ f\circ g^{-1}=\\
&=\hat D_1\circ g^{-1}=\\
&=T^{-1}\\
  \end{aligned} 
$$
and so
$$K_{AMO}(t)=K_G(t)+G^\prime (G^{-1}(t))T^{-1} (G^{-1}(t)).$$
\end{proof}

In terms of the generating functions $H_1,H_2,H_3$ the Kendall's function can be rewritten as
$$K_{AMO}(t)=K_G(t)+G^\prime (G^{-1}(t))\cdot H_3\circ \hat H^{-1}(G^{-1}(t)).$$
\bigskip

In particular, if $H_3=c(H_1+H_2)$, with $c>0$ (see Example \ref{ex1}) we get 
\begin{equation}\label{multiplo}K_{AMO}(t)=K_G(t)+\frac c{c+1}G^\prime (G^{-1}(t))G^{-1}(t)\end{equation}
and this case includes the case of linear distortions of Example \ref{ex2}.
\medskip

Notice that (\ref{multiplo}) can be rewritten as $K_{AMO}(t)=t-\frac 1{c+1}G^\prime (G^{-1}(t))G^{-1}(t)$, and if we consider two different generators $G_A$ and $G_B$ so that 
$K_{G_A}(t)\leq K_{G_B}(t)$, but identical proportional parameter $c$, then 
$K_{G_AMO}(t)\leq K_{G_BMO}(t)$, where with $K_{G_jMO}$ we denote the Kendall's function of the Archimedean-based
Marshall-Olkin copula with generator $G_j$, for $j=A,B$.\\
This fact doesn't continue to hold in general as shown in next example.
\begin{example}\label{inversion}
Let $D_1(x)=D_2(x)=1+x-\sqrt{1+2x}$. It follows that
$$K_{AMO}(t)=t-\frac 14
G^\prime (G^{-1}(t))
\left (\sqrt{1+4\cdot G^{-1}(t)}-1\right )^2
$$
If we consider the Frank generator $G_F(x)=-\frac 1\theta\ln\left (1+e^{-x}(e^{-\theta}-1)\right )$ 
with parameter $\theta =4$ and the Gumbel generator $G_G(x)=\exp\left (-z^{1/\gamma}\right )$ with parameter
$\gamma =2$we have
$$K_{G_F}(0.3)=0.497>0.480=K_{G_G}(0.3)$$
while
$$K_{G_FMO}(0.3)=0.341<0.380=K_{G_GMO}(0.3).$$
\end{example}
\begin{example} \label{caso}Let us consider 
% If $D_i(x)=x-\frac 1{\alpha _i}\log(\alpha _i x+1)$ then
%$$K_{AMO}(t)=K_G(t)+
%G^\prime( G^{-1}(t))
%\cdot\frac 1{\alpha_1\alpha _2}
%\left (\alpha _2e^{\alpha _1G^{-1}(t)}+\alpha _1e^{\alpha _2G^{-1}(t)}-\alpha _1-\alpha _2-\alpha _1\alpha _2G^{-1}(t)\right )$$
 $D_i(x)=x-\frac 1{\alpha _i}(\sqrt{\alpha _i x+1}-1)$ (see Example \ref{ex2}). Then
$$K_{AMO}(t)=K_G(t)+
\frac{3G^\prime( G^{-1}(t))}{2(\alpha_1+\alpha _2)}
\cdot
\left (\sqrt{1+\frac 49(\alpha _1+\alpha _2)G^{-1}(t)}-1\right )$$

\end{example}

\subsection{Kendall's tau}
It is known that the Kendall's tau of a bivariate copula $C$ is given by
$$\tau_C=4E[C(U,V)]-1=3-4\int _0^1K(t)dt.$$
Hence, we get
$$\begin{aligned}\tau_{AMO}&=3-4\int _0^1K_{AMO}(t)dt=\\
&=3-4\int _0^1(K_G(t)+
G^\prime (G^{-1}(t))\cdot
T^{-1} (G^{-1}(t)))dt=\\
&=\tau _G-4\int_0^1G^\prime (G^{-1}(t))\cdot
T^{-1}(G^{-1}(t))dt=\\
&=\tau _G+4\int_0^{x_G}\left (G^\prime (x)\right )^2\cdot
T^{-1}(x)dx
    \end{aligned}
$$
where $\tau_{G}$ is the Kendall's tau of the Archimedean copula with generator $G$.
Obviously, as expected from (\ref{maggiorazione}),
$$\tau_{AMO}\geq \tau_{G}.$$
In terms of the generating functions $H_1,H_2,H_3$ the Kendall's tau can be rewritten as
$$\tau_{AMO}=\tau _G+4\int_0^{x_G}\left (G^\prime (x)\right )^2\cdot
H_3\circ \hat H^{-1}(x)dx.$$
\begin{example}
If $G(x)=e^{-x}$ we recover the case studied in Li and Pellerey(2011) and we get
$$\tau_{AMO}=4\int_0^{+\infty}e^{-2x}T^{-1}(x)dx.$$
\end{example}

\begin{example}
Let us consider the Clayton case, that is $G(x)=(x+1)^{-1/\theta}$, with $\theta >0$.\begin{enumerate}  
\item If $H_3=c(H_1+H_2)$, $c>0$, then
$$\tau_{AMO}=\tau _G+\frac c{c+1}\frac 2{2+\theta}=\tau_G+\tau_{IMO}\frac 2{\theta+2}$$
where $\tau_{IMO}=\frac c{c+1}$ is the Kendall's tau when $G=e^{-x}$. Notice that this result continues to hold for $\theta\in(-\frac 12,0)$ and $c=\frac 12$.
\item If $\hat D_1^{-1}(x)+\hat D_2^{-1}(x)=e^{\frac x\gamma}-1+x$, with $1\geq\gamma >0$  
(this corresponds to $H_1(z)+H_2(z)=e^{\frac{H_3(z)}\gamma}-H_3(z)-1$), we get
$\left (\hat D_1^{-1}(x)+\hat D_2^{-1}(x)-x\right )^{-1}=\gamma\ln (x+1)$. Then
$$\tau_{AMO}=\tau_G+\frac {4\gamma}{(2+\theta )^2}=\frac {\theta ^2+2\theta +4\gamma}{(2+\theta )^2}.$$
\item  
If $\hat D_1^{-1}(x)+\hat D_2^{-1}(x)=\left (\frac x\gamma +1\right )^{\alpha}-1+x$ with $\alpha >1$, $1\geq\gamma >0$ (this corresponds to 
$H_1(z)+H_2(z)=\left (\frac{H_3(z)}\gamma+1\right )^{\alpha}-1-H_3(z)$), we get
$\left (\hat D_1^{-1}(x)+\hat D_2^{-1}(x)-x\right )^{-1}=\gamma\left ((x+1)^{\frac 1{\alpha}}-1\right )$. 
Then
$$\tau_{AMO}=\tau_G+\frac {4\gamma}{(2+\theta )(\alpha(2+\theta)-\theta)}=\frac 1{\theta +2}\left (\theta+
\frac {4\gamma}{\alpha(2+\theta)-\theta}\right ).$$
\item If $D_i(x)=x-\frac 1{\alpha _i}(\sqrt{\alpha _i x+1}-1)$ (see Example \ref{caso}),
$\left (\hat D_1^{-1}(x)+\hat D_2^{-1}(x)-x\right )^{-1}=
\frac{3}{2(\alpha_1+\alpha _2)}
\left ((\sqrt{1+\frac 49(\alpha _1+\alpha _2)x}-1\right )$. If $\alpha_1+\alpha _2=\frac 94$, we recover the previous case with $\gamma=\frac 23$ and $\alpha =2$.
Hence
$$\tau_{AMO}=\frac {3\theta ^2+12\theta+8}{3(2+\theta)(4+\theta)}.$$

\end{enumerate}

\end{example}
Exactly as for the Kendall's function, the presence of distortion functions strongly influences the Kendall's tau.
In Table \ref{tab1}, the values of the Kendall's tau of some Archimedean copulas and the corresponding values
of the Archimedean-based Mashall-Olkin copulas with same generator but distortions $
D_1(x)=D_2(x)=1+x-\sqrt{1+2x}$ are reported in order to illustrate how the presence of the distortions  
can induce an inversion in the order of the concordance measure.

\begin{center}
 \begin{table}
\caption{\footnotesize Comparison of Kendall's tau value with and without distortions.}\label{tab1}\vskip 1pc
\begin{center}\begin{tabular}{|c|c|c|c|}
\hline
Generator&Parameter&$\tau_G $&$\tau_{AMO}$\\
\hline 
Clayton&$\theta=2$&$0.5$&$0.78539$\\
\hline
Gumbel&$\theta=1.8$&$0.375$&$0.81636$\\
\hline
Frank&$\theta=4$&$0.388$&$0.906$\\
\hline
\end{tabular}\end{center}
\end{table}
\end{center}

\subsection{Tail dependence}
We recall that the upper and lower tail dependence parameters of a copula $C$ are given, respectively, by
$$\lambda_U=2-\underset{u\uparrow 1}\lim\frac{1-C(u,u)}{1-u}\text{ and }
\lambda_L=\underset{u\downarrow 0}\lim\frac{C(u,u)}{u}$$
when these limits exist and that if $\lambda_U\in (0,1]$ ($\lambda_L\in (0,1]$) we have that $C$ has upper(lower)-tail dependence
(we refer to Section 5.4 in Nelsen (2006), for more details).\medskip

Let $\lambda ^{AMO}_L$ and $\lambda ^{AMO}_U$ be the lower and upper tail dependence parameters of the copula $C_{AMO}$ and 
$\lambda ^G_L$ and $\lambda ^G_U$ be the tail dependence parameters of the corresponding bivariate 
Archimedean copula with generator $G$.

Let $\mathcal A_1=\{x\in[0,x_G):D_1(x)\geq D_2(x)\}$ and $\mathcal A_2=\mathcal A_1^c$. Notice that 
$G^{-1}(u)\in\mathcal A_1$ if and only if $(u,u)$ lies on or below the curve $F$ defined in (\ref{frontier}).

\subsubsection{$\lambda^{AMO}_L$}
If $x_G<+\infty$, obviously $\lambda^{AMO}_L=0.$ 
\medskip

\noindent Let us now consider the case $x_G=+\infty$. 
We have 
$$C_{AMO}(u,u)=G\left (G^{-1}(u)+D_1(G^{-1}(u)){\bf 1}_{\mathcal A_1}(G^{-1}(u))+D_2(G^{-1}(u)){\bf 1}_
{\mathcal A_2}(G^{-1}(u))\right ).$$
Setting $x=G^{-1}(u)$, 
$$\begin{aligned}
 \lambda^{AMO}_L&= \underset{u\downarrow 0}\lim\frac{C_{AMO}(u,u)}u=\\
&=\underset{x\rightarrow +\infty}\lim\frac{G(x+D_1(x){\bf 1}_{\mathcal A_1}(x)+
D_2(x){\bf 1}_{\mathcal A_2}(x))}{G(x)}
  \end{aligned}
$$

\noindent
If there exists $\bar x>0$ such that $(\bar x,+\infty)\subset \mathcal A_i$ then 
$$ \lambda^{AMO}_L=\underset{x\rightarrow +\infty}\lim\frac{G(x+D_i(x))}{G(x)}.$$
The value of the above limit clearly depends on the type of decay of $G$ to zero. If the decay is of polinomial type, that is
there exist $c,\gamma >0$ such that $G(x)\underset{x\rightarrow +\infty}\sim cx^{-\gamma}$, then
$$\underset{x\rightarrow +\infty}\lim\frac{G(x+D_i(x))}{G(x)}=\underset{x\rightarrow +\infty}\lim\left (1+\frac{D_i(x)}x\right )^{-\gamma}$$
Hence, if $\underset{x\rightarrow +\infty}\lim \frac{D_i(x)}{x}=\beta\in [0,1]$ exists, we have that
$$ \lambda^{AMO}_L=(1+\beta)^{-\gamma}.$$
If the decay is exponential of type $G(x)\underset{x\rightarrow +\infty}\sim ce^{-ax^\gamma}$, with $a,c,\gamma>0$,
we have
$$\begin{aligned}\underset{x\rightarrow +\infty}\lim\frac{G(x+D_i(x))}{G(x)}&=
\underset{x\rightarrow +\infty}\lim e^{-a\left ((x+D(x))^\gamma-x^\gamma\right )}=\\
&=\underset{x\rightarrow +\infty}\lim e^{-ax^\gamma\left (\left (1+\frac{D(x)}x\right )^\gamma-1\right )}.
\end{aligned}$$
If $\underset{x\rightarrow +\infty}\lim \frac{D_i(x)}{x}=\beta\in (0,1]$, then 
$ \lambda^{AMO}_L=0$.
\\
If $\underset{x\rightarrow +\infty}\lim \frac{D_i(x)}{x}=0$, then
$$\lambda^{AMO}_L=\underset{x\rightarrow +\infty}\lim e^{-ax^\gamma\left (\left (1+\frac{D(x)}x\right )^\gamma-1\right )}=\underset{x\rightarrow +\infty}\lim e^{-a\gamma  \frac{D(x)}{x^{1-\gamma}}}$$
and, if $\gamma \geq 1$, then
$ \lambda^{AMO}_L=0$
while, if $\gamma <1$, then
$$\lambda^{AMO}_L=e^{-a\gamma \underset{x\rightarrow +\infty}\lim \frac{D_i(x)}{x^{1-\gamma}}}.$$

\begin{example}\label{infinito}
Assume that there exists 
$\bar x>0$ such that $(\bar x,+\infty)\subset \mathcal A_i$ with 
$\underset{x\rightarrow +\infty}\lim \frac{D_i(x)}{x}=\beta _i$.
We have
\begin{itemize}
 \item 
in the Clayton case $G(x)=(x+1)^{-1/\theta}$, $\theta >0$, $\lambda^{AMO}_L=\left (1+\beta _i\right )^{-1/\theta}\geq 2^{-1/\theta}=\lambda^G_L$;
\item in the Gumbel case $G(x)=e^{-x^{1/\theta}}$, $\theta\geq 1$, with $\beta _1\in (0,1]$, $\lambda^{AMO}_L=0=\lambda^G_L
$;
\item in the Frank case, $G(x)=-\frac 1\theta\ln\left (1+e^{-x}(e^{-\theta}-1)\right )$, $\theta\in\mathbb R$, $\lambda^{AMO}_L=0=\lambda^G_L$.
\end{itemize}
\end{example}

\subsubsection{$\lambda ^{AMO}_U$}
Similarly as done for $\lambda _L^{AMO}$ we can calculate $\lambda _U^{AMO}$ in some specific cases.\medskip

If there exists $\hat x>0$ such that $(0,\hat x)\subset \mathcal A_i$ then 
$$ \lambda^{AMO}_U=2-\underset{x\rightarrow 0}\lim\frac{1-G(x+D_i(x))}{1-G(x)}.$$
If 
$$1-G(x)\underset{x\downarrow 0}\sim cx^\gamma$$
for some $c,\gamma >0$, then 
$$\underset{x\rightarrow 0}\lim\frac{1-G(x+D_i(x))}{1-G(x)}=\underset{x\rightarrow 0}\lim\left (1+\frac{D_i(x)}x\right )^\gamma$$
Hence, if $\underset{x\rightarrow 0}\lim\frac{D_i(x)}x=\beta_i\in[0,1]$, then
$$\lambda^{AMO}_U=2-(1+\beta _i)^\gamma.$$
\bigskip

\begin{example}\label{zero}
Assume that there exists 
$\bar x>0$ such that $(0,\hat x)\subset \mathcal A_i$ with $\underset{x\rightarrow 0^+}\lim \frac{D_i(x)}{x}=
\beta _i\in [0,1]$. We have
\begin{itemize}
 \item 
in the Clayton case $G(x)=(x+1)^{-1/\theta}$, $\theta >0$, $\lambda^{AMO}_U=1-\beta _i\geq 0=\lambda ^G_U$;
\item in the Gumbel case $G(x)=e^{-x^{1/\theta}}$, $\theta\geq 1$, $\lambda^{AMO}_U=2-\left (1+\beta _i\right )^{1/\theta}\geq 2-2^{1/\theta}=\lambda ^G_U$;
\item in the Frank case, $G(x)=-\frac 1\theta\ln\left (1+e^{-x}(e^{-\theta}-1)\right )$, $\theta\neq 0$,
$\lambda^{AMO}_U=1-\beta _i\geq 0=\lambda ^G_U$.
\end{itemize}
\end{example}
\section*{Aknowledgements}
The author thanks Fabrizio Durante and Matthias Scherer for their helpful comments and suggestions.

\end{document}